\renewcommand\title[1]{\gdef\@title{\reset@font\Large\bfseries #1}}
\renewcommand\section{\@startsection {section}{1}{\z@}%
                                   {-3.5ex \@plus -1ex \@minus -.2ex}%
                                   {2.3ex \@plus.2ex}%
                                   {\normalfont\large\bfseries}}
\renewcommand\subsection{\@startsection{subsection}{2}{\z@}%
                                     {-3ex\@plus -1ex \@minus -.2ex}%
                                     {1.5ex \@plus .2ex}%
                                     {\normalfont\normalsize\bfseries}}
\renewcommand\subsubsection{\@startsection{subsubsection}{3}{\z@}%
                                     {-2.5ex\@plus -1ex \@minus -.2ex}%
                                     {1.5ex \@plus .2ex}%
                                     {\normalfont\normalsize\bfseries}}
\def\@runningauthor{}\newcommand{\runningauthor}[1]{\def\runningauthor{#1}}
\def\@runningtitle{}\newcommand{\runningtitle}[1]{\def\runningtitle{#1}}
\renewcommand{\ps@plain}{%
\renewcommand{\@evenfoot}{\footnotesize Sequences and Their Applications (SETA) 2016\hfill\thepage}
\renewcommand{\@oddfoot}{\footnotesize Sequences and Their Applications (SETA) 2016 \hfill\thepage}
\renewcommand{\@evenhead}{\footnotesize\scshape \hfill\runningauthor\hfill}
\renewcommand{\@oddhead}{\footnotesize\scshape \hfill\runningtitle\hfill}}
\g@addto@macro\bfseries{\boldmath}
\theoremstyle{plain}
\newtheorem{theorem}{Theorem}
\newtheorem{lemma}[theorem]{Lemma}
\newtheorem{corollary}[theorem]{Corollary}
\theoremstyle{definition}
\newtheorem{example}[theorem]{Example}
\newtheorem{construction}[theorem]{Construction}
\theoremstyle{remark}
\newtheorem{remark}[theorem]{Remark}
\title{A New Method to Construct Gloay Complementary Set by Paraunitary Matrices and Hadamard Matrices}
\runningtitle{Constructtion of Gloay Complementary Set by Paraunitary Matrices and Hadamard Matrices}
\author{Zilong Wang\thanks{The authors are supported by funding organisation xyz.}, Gaofei Wu, Dongxu Ma \\
\small State Key Laboratory of Integrated Service Networks,\\[-0.8ex]
\small Xidian University\\[-0.8ex]
\small Xi'an, China\\
\small\tt zlwang@xidian.edu.cn, gaofei\_wu@qq.com, xidiandongxuma@foxmail.com\\
}
\runningauthor{Zilong Wang, Gaofei Wu, Dongxu Ma}
\date{}
\begin{document}

\maketitle

\thispagestyle{empty}

\begin{abstract}
Golay complementary sequences have been put a high value on the
applications in orthogonal frequency-division multiplexing (OFDM)
systems since its good peak-to-mean envelope power ratio(PMEPR)
properties. However, with the increase of the code length, the code
rate of the standard Golay sequences suffer a dramatic decline. Even
though a lot of efforts have been paid to solve the code rate
problem  for OFDM application, how to
construct large classes of sequences with low PMEPR is still
difficult and open now. In this paper, we propose a new method to
construct $q$-ary  Golay complementary set of size $N$ and length
$N^n$ by $N\times N$ Hadamard Matrices where $n$ is arbitrary and
$N$ is a power of 2. Every item of the constructed sequences can be
presented as the product of the specific entries of the
Hadamard Matrices.  The previous works in \cite{BudIT} can be
regarded as a special case of the constructions in this paper and we
also obtained new quaternary Golay sets never reported in the
literature.
\end{abstract}

\textbf{Keywords:} Golay sequences, Golay set, PMEPR, Hadamard matrix, Paraunitary matrix.

\section{Introduction}
In the fundamental paper written by M. Golay\cite{Golay1961},
binary complementary sequences were first introduced in the context
of infrared spectrometry. A complementary pair has the useful
property that the aperiodic autocorrelations of the two sequences
sum up to zero for each nonzero shift. Golay complementary pairs and
sequences have been found many applications in the fields of science and
engineering, especially in wireless communication. One of
the most important applications is in orthogonal frequency-division
multiplexing (OFDM) \cite{Popovic91}\cite{Lim2009}. The good property of $q$-ary Golay sequences, that their peak-to-mean
envelope power ratio (PMPPR) is at most 2, makes them appropriate to use as codewords
in OFDM system.

In \cite{Golay1961}, Golay proposed recursive concatenation and interleaving algorithms to construct
binary Golay complementary pairs of length $2^n$. Binary Golay complementary
pairs were generalized to complementary sets in \cite{Tseng1972}, and to polyphase
sequences in \cite{Sivaswamy1978}. In 1999, Davis and Jedwab's milestone work showed that known $q$-ary sequences in Golay pair fill up
specific second-order cosets of the generalized first-order
Reed-Muller codes \cite{DavisJedwab99}.Such sequences are so-called Golay-Davis-Jedwab (GDJ) sequences or standard
Golay sequences. However, with the increase of the code length, the
code rate of the standard Golay sequences suffers a dramatic decline \cite{DavisJedwab99}.

To solve the code rate problem of the standard
Golay sequences for OFDM application, a lot of efforts have been paid in the literature where PMEPR is bounded by a
finite value $c\geqslant 2$, such as finding more non-standard Golay sequence\cite{Li2005}\cite{fiedler2006}\cite{fiedler2},
design Golay complementary set  \cite{paterson}\cite{schmidt07},design near complementary sequences\cite{schmidt06}, design sequences with low PMEPR over QAM constellation\cite{Tarokh2001}\cite{Tarokh2003}\cite{Wang2009}, and design Golay complementary sequences over QAM constellation\cite{Chong2003}\cite{Li2010}\cite{Tarokh2003}\cite{Liu2013}. However, these coding method did not solve the code rate problem in OFDM system. For this reason, how to construct large classes of sequences with low PMEPR
is critical and still open now.

All the above Boolean-function-based constructions are proved by the
property that the summation of the aperiodic autocorrelations of the
two or more sequences equals zero for each nonzero shift. There
exists related matrix-based approaches to complementary sequence
design in the literature. For instance, the complete complementary
code approach of \cite{Suehiro} and the paraunitary matrix approach
to filter banks for complementary sequences, as discussed in
\cite{BudSpas}\cite{BudQAM}. In particular, Bud\v{i}sin and
Spasojevi$\acute{c}$ proposed a new method to construct Golay
complementary pairs by paraunitary matrices over both PSK and QAM
constellations in \cite{BudIT}. By this method,  not only all the
standard Golay pair over PSK constellation and known Golay pair over
QAM constellation can be explained, but also some new QAM complementary pairs can be constructed. Here the word
`paraunitary' refers to a matrix polynomial in $z$ that is unitary
for all $z$ on unit circle.

In this paper, inspired by the paraunitary-matrix-based method in
\cite{BudIT}, we propose a new method to construct $q$-ary  Golay
complementary set of size $N$ and length $N^n$ by $N\times N$
Hadamard matrices where $n$ is arbitrary and $N$ is a power of 2.
Every item of the constructed sequences can be presented as the
product of the specific entries of the Hadamard matrices. To
make the results easier to understand for readers, we only show the
proof for $N=2^2=4$, and the case for othe is $N$ similar to $N=4$. Bud\v{i}sin
and  Spasojevi$\acute{c}$'s algorithm for $q$-ary Golay pair in
\cite{BudIT} can be regarded as a special case of the construction  in this
paper by setting $N=2$ and selecting specific Hadamard matrices. We
also show new quaternary Golay pair of size 4 and degree 3 never
reported in the literature can be obtained by the method presented
in this paper.





\section{Preliminaries}

\subsection{Peak Power Control in OFDM and Golay Set}
Let $\textbf{a}=(a_0,a_1,\cdots,a_{L-1})$ be a $q$-ary sequence of
length $L$ where $a_i$ is a $q$th roots of
unity. In an  OFDM system with $L$ subcarriers, the transmitted signal by
employing sequence $\textbf{a}$ can be modeled as the real part of
$$s_\textbf{a}(t)=\sum_{i=0}^{L-1}{a_ie^{2\pi\sqrt{-1}(f_0+i\bigtriangleup
f)t}},\  t\in [0,\frac{1}{\bigtriangleup f}),$$ where
$\bigtriangleup f$ is the frequency separation between adjacent
subcarrier pairs and  $f_0$ is the base frequency.

The sequence $\textbf{a}$ can be associated with the polynomial
\begin{equation}\label{ploynomial}
A(z)=\sum_{i=0}^{L-1}{{a_i}z^i}.
\end{equation}
in indeterminate $z$. We use $A(z)$ instead of sequence $\textbf{a}$ in the rest of the paper for convenience.

By restricting $z$ to lie on the unit circle in the complex plane,
i.e.,$z=e^{2\pi\sqrt{-1}\bigtriangleup
ft}$. We have
$$|s_\textbf{a}(t)|=|A(e^{2\pi\sqrt{-1}\bigtriangleup ft})|.$$
Then the {\em instantaneous envelope power} of the
transmitted signal is determined by $A(z)\overline{A}(z^{-1})$
where the conjugate polynomial $\overline{A}(z^{-1})=\sum_{i=0}^{L-1}{{\overline{a_i}}z^{-i}}$, and the  {\em
peak-to-mean envelope power ratio} (PMEPR)  is determined  by
\begin{equation}\label{pmepr}
\mbox{PMEPR}(\textbf{a})=\frac{1}{L}\sup_{|z|=1}A(z)\overline{A}(z^{-1}).
\end{equation}

If a set of sequences $A_i(z)$ for $1\leqslant i\leqslant N$ satisfy
\begin{equation}\label{Golayset}
\sum_{i=1}^N A_i(z\overline{)A_i}(z^{-1})=LN,
\end{equation}
$\{A_i(z): 1\leqslant i\leqslant N\}$ is called a Golay set of size $N$. It is straightforward that
the PMEPR of each sequence $A_i(z)$ in Golay set is upper bounded by $N$ by definition in (\ref{pmepr}).

\subsection{Butson Type Hadamard matrix}
A complex Hadamard matrix is a complex $N\times N$ matrix $H$
satisfying $|H_{ij}|=1$ ($i,j=1,2,\cdots, N$) and
$HH^{\dagger}=N\cdot I_{N}$, where $ H^{\dagger}$  denotes the
Hermitian transpose of \(H\) and \({I}_N\) is the identity matrix. A
complex Hadamard matrix of size \(N\) is called of Butson type
\(H(q,N)\)  \cite{Butson1962} if all the entries of \(H\) are $q$th roots of
unity. We are only interested in Butson type Hadamard matrices in the rest of the  paper.

Two Butson type $(q,N)$ Hadamard matrices  $H_1$ and $H_2$ are called
equivalent,  written as \(H_1\simeq H_2\), if there exist diagonal
unitary matrices \(D_1\), \(D_2\) where each diagonal entry of $D_i$ is $q$th root of
unity and permutation matrices \(P_1\),
\(P_2\) such that:

\begin{equation}\label{Hadamard}
H_1 = D_1 P_1 H_2 P_2 D_2.
\end{equation}



For example, all binary Hadamard  matrices in  \(H(2,4)\) are equivalent to the $4\times 4$ Hadamard matrix:
$$
\left ( \begin{array}{cccc}
1 & 1 &1&1\\1 & -1&1&-1\\
 1 &1&-1&-1\\
  1 &-1&-1& 1
\end{array}  \right ),
$$
so every (2,4)  Hadamard  matrices can be generated by the above matrix.
And all quaternary  Hadamard  matrices in  \(H(4,4)\) are equivalent to one of  the following
 $4\times 4$ Hadamard matrix:
$$
\left ( \begin{array}{cccc}
1 & 1 &1&1\\1 & -1&1&-1\\
 1 &1&-1&-1\\
  1 &-1&-1& 1
\end{array}  \right ),
~and~
\left ( \begin{array}{cccc}
1 & 1 &1&1\\1 &i &-1&-i\\
 1 &-1&1&-1\\
  1 &-i&-1& i
\end{array}  \right ),
$$
so  every (4,4)  Hadamard  matrices can be generated by one of the above matrices by using formula (\ref{Hadamard}) .

\subsection{Paraunitary  Matrix and Golay Set}
An $N\times N$ complex Matrix  is called unitary if $MM^{\dagger}=I_{N}$.
An $N\times N$ polynomial matrix $M(z)$, where every entry $M_{ij}(z)$  is a polynomial  in indeterminate $z$ with complex coefficients, is called paraunitary if
\begin{equation}\label{Paraunitary}
M(z)M^{\dagger}(z^{-1})=c\cdot I_{N}
\end{equation}
for real constant $c$. We are only interested in the paraunitary matrix $M(z)$ where all the coefficients of every entry $M_{ij}(z)$ are $q$th roots of
unity in the rest of the  paper.

In the next section, we present a new method to design $N\times N$  paraunitary  matrix $M(z)$ satisfying
\begin{equation}\label{Paraunitary2}
M(z)M^{\dagger}(z^{-1})=L\cdot I_{N},
\end{equation}
where every entry $M_{ij}(z)$ is a polynomial of degree $L-1$ with $q$th root coefficients, i.e., every entry corresponds to a $q$-ary sequence of length $L$.
From the definition of the Golay set, the sequences in every row or every column of the matrix $M(z)$ satisfying (\ref{Paraunitary2}) form a $q$-ary Golay set of size $N$ and length $L$.

\section{New Construction}
In this section, by using paraunitary matrix and  Butson type $(q,N)$ Hadamard matrices,  we propose a new method to construct Golay set.

\begin{construction}\label{cons:main}
Let $N$ be a power of 2, $n$  a non-negative integer, and $H^{(i)}$ an arbitrary Butson type $(q,N)$ Hadamard matrix for $0\leqslant i\leqslant n$ if it exists .
Let $D(z)$ be a diagonal paraunitary matrix with the form $D(z)=diag\{{1, z, z^2,\cdots, z^{N-1}}\}$, and $\pi$ a permutation
of numbers $\{0,1,2,\cdots,n-1\}$. Define $N\times N$  matrix $M^{\{n\}}(z)$ as follows:
\begin{small}
\begin{eqnarray}
M^{\{n\}}(z)&=&H^{\{0\}}\cdot (D(z))^{N^{\pi(0)}}\cdot H^{\{1\}}\cdots
(D(z))^{N^{\pi(n-1)}}\cdot H^{{\{n\}}}\nonumber\\
&=&H^{\{0\}}\left\{\prod_{t=1}^{n}\left((D(z))^{N^{\pi(t-1)}}\cdot H^{\{t\}}\right)\right\}\nonumber\\
&=&H^{\{0\}}\cdot\prod_{t=1}^{n}\left(\begin{bmatrix}1&0&0&0&\cdots&0\\0&z^{
{N^{\pi(t)}}}&0&0&\cdots&0
\\0&0&z^{2\cdot N^{\pi(t)}}&0&\cdots&0\\0&0&0&z^{3\cdot N^{\pi(t)}}&\cdots&0\\\vdots&\vdots&\vdots&\vdots&\vdots&\vdots\\0&0&0& 0&\cdots&z^{(N-1)\cdot N^{\pi(t)}}\end{bmatrix}
\cdot H^{\{t\}}\right).
\end{eqnarray}
\end{small}
\end{construction}

\begin{theorem}
The sequences in every row or every column of the matrix $M^{\{n\}}(z)$ defined above form a $q$-ary Golay set of size $N$ and length $N^n$.
\end{theorem}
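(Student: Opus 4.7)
The plan is to prove that $M^{\{n\}}(z)$ is a paraunitary matrix of the form $M^{\{n\}}(z)(M^{\{n\}})^{\dagger}(z^{-1}) = N^{n+1}\cdot I_N$, where each entry is a $q$-ary polynomial of degree exactly $N^n-1$. Once this is established, the rows (resp.\ columns) of $M^{\{n\}}(z)$ yield $N$ sequences of length $N^n$ whose aperiodic autocorrelations sum to $N \cdot N^n$, matching the definition of a Golay set in~(\ref{Golayset}).

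First I would verify the paraunitary identity by iterated cancellation. Let $D_t(z) = (D(z))^{N^{\pi(t)}}$. A direct computation shows $D_t(z)\, D_t^{\dagger}(z^{-1}) = I_N$ since each diagonal entry has modulus $1$ on $|z|=1$ and, formally, multiplies to $z^{k}\cdot z^{-k}=1$. Together with $H^{\{t\}}(H^{\{t\}})^{\dagger} = N\cdot I_N$ from the definition of a Butson Hadamard matrix, I would telescope from the innermost product outwards: $H^{\{n\}}(H^{\{n\}})^{\dagger} = N I_N$ collapses, then $D_{n-1}D_{n-1}^{\dagger}(z^{-1})=I_N$, then $H^{\{n-1\}}(H^{\{n-1\}})^{\dagger} = N I_N$, and so on, producing a factor of $N$ at each of the $n+1$ Hadamard layers. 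The net result is $M^{\{n\}}(z)(M^{\{n\}})^{\dagger}(z^{-1}) = N^{n+1}\cdot I_N$, and by~(\ref{Paraunitary2}) this is the Golay set condition once we verify that the length $L$ is indeed $N^n$.

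Second I would identify the length $L=N^n$ and confirm that each coefficient is a $q$-th root of unity. Expanding the matrix product gives, for the $(r,s)$ entry,
\begin{equation*}
M^{\{n\}}_{r,s}(z) \;=\; \sum_{i_0,\ldots,i_{n-1}\in\{0,\ldots,N-1\}} \Bigl(\prod_{t=0}^{n} H^{\{t\}}_{i_{t-1},i_t}\Bigr)\, z^{\,\sum_{t=0}^{n-1} i_t\, N^{\pi(t)}},
\end{equation*}
with the convention $i_{-1}=r$, $i_n = s$. The key combinatorial observation is that, since $\pi$ is a permutation of $\{0,1,\ldots,n-1\}$, the map $(i_0,\ldots,i_{n-1}) \mapsto \sum_{t} i_t\, N^{\pi(t)}$ is simply the base-$N$ expansion. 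Hence distinct tuples produce distinct exponents in $\{0,1,\ldots,N^n-1\}$, which proves both that the degree is $N^n-1$ (so the sequence length is $N^n$) and that no cancellation or accumulation across tuples occurs. Therefore every coefficient is a single product of Hadamard entries, each a $q$-th root of unity, so the result is indeed $q$-ary.

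The main obstacle is the second step: establishing the bijection between index tuples and monomials, since this is what guarantees the $q$-ary property and forbids any partial summation that could destroy the root-of-unity structure. The paraunitary cancellation is conceptually clean once one sets it up, but one must be careful with the order of the daggers and with the definition $D^{\dagger}(z^{-1})$, noting that conjugation inverts $z$ but not the exponent's sign pattern in the product. Once both pieces are in hand, the Golay-set conclusion for rows follows from reading off $M^{\{n\}}(z)(M^{\{n\}})^{\dagger}(z^{-1}) = N^{n+1} I_N$ diagonally, and for columns from the symmetric identity $(M^{\{n\}})^{\dagger}(z^{-1})M^{\{n\}}(z) = N^{n+1} I_N$, which follows from the same telescoping applied in the reverse order.
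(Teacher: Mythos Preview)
Your argument is correct and follows the same three-part outline the paper uses: telescope the paraunitary identity, then show every entry has degree $N^n-1$ with each coefficient a single product of Hadamard entries. Where you diverge is in how you execute the second and third parts. The paper (in Section~4, for $N=4$) introduces an auxiliary combinatorial identity (Lemma~\ref{lem:gong} and its permuted Corollary) about products of sums of matrix-valued Boolean functions, and uses it to derive a closed-form coefficient matrix $M^{\{n\}}(m)=H^{\{0\}}\prod_{t=1}^{n}\bigl(A_t\,H^{\{t\}}\bigr)$ in Theorem~\ref{thm:M}, where each $A_t$ is a diagonal $\{0,1\}$-matrix with a single nonzero entry picked out by the base-$4$ digits of $m$. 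Your direct expansion of the matrix product, combined with the one-line observation that $(i_0,\dots,i_{n-1})\mapsto\sum_t i_t N^{\pi(t)}$ is the base-$N$ bijection onto $\{0,\dots,N^n-1\}$, reaches the same conclusion without the intermediate machinery and handles general $N$ uniformly rather than only $N=4$. What the paper's route buys is an explicit matrix formula for $M^{\{n\}}(m)$ that makes the ``product of specific Hadamard entries'' structure visible as a matrix identity; what your route buys is brevity and generality.
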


To verify the Theorem 2, we need to prove the following properties by Section 2.3.
\begin{itemize}
\item{(1)} $M^{\{n\}}(z)$ is a paraunitary  matrix satisfying $M^{\{n\}}(z)M^{{\{n\}}\dagger}(z^{-1})=N^{n+1}\cdot I_{N}$
\item{(2)} The degree of every entry of $M_{ij}^{\{n\}}(z)$ is $N^n-1$
\item{(3)} All the coefficients of every entry  $M_{ij}^{\{n\}}(z)$ are $q$th roots of
unity
\end{itemize}

The first property is very easy to check, since $H^{(i)}H^{(i)\dagger}=N\cdot I_{N}$ for $0\leqslant i\leqslant n$, and $D(z)D(z)^{\dagger}=I_{N}$.
We prove the second and third properties by verifying every coefficient of $M_{ij}^{\{n\}}(z)$  presented by the product of the entries of
the involved Butson type Hadamard matrices.

For $N=2$, the proof for Theorem 2 can be easily derived from Bud\v{i}sin and  Spasojevi$\acute{c}$'s algorithm in \cite{BudIT},
which can generate all $q$-ary standard Golay pairs. We will show the proof for the case $N=4$ in the next section, in which new Golay set of size $4$ can be obtained.
For general $N$, the proof is similar to the case $N=4$.

\section{Proof for $N=4$}

 Suppose that $N=4$ and $N^n=4^n=L$, $m$ is an  integer and
$m\in[0,L-1]$. Note that the binary expansion of  $m$ is:
\begin{equation}
\label{equ:dec}
m=\sum^{n-1}_{k=0}(\delta_{2k}(m)2^{2k}+\delta_{2k+1}(m)2^{2k+1}).
\end{equation}
\begin{lemma}
\label{lem:gong} Let $L=4^n$, where $n$ is a non-negative integer.
Then we have
\begin{equation}
\prod^{n-1}_{k=0}(F_k(0,0)+F_k(1,0)+F_k(0,1)+F_k(1,1))=\sum^{L-1}_{m=0}\left\{\prod^{n-1}_{k=0}
F_k(\delta_{2k}(m),\delta_{2k+1}(m))\right\},
\end{equation}
where $F_k(x_0,x_1)$ are  matrix complex Boolean functions,
i.e.
$F_k(x_0,x_1)=M_1\cdot\overline{x_0}\cdot\overline{x_1}+M_2\cdot
x_0\cdot\overline{x_1}+M_3\cdot\overline{x_0}\cdot x_1+M_4\cdot
x_0\cdot x_1$ where $M_1,M_2,M_3,M_4$ are arbitrary complex
matrices and $\overline{0}=1, \overline{1}=0$.
\end{lemma}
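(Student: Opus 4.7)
The plan is to prove the lemma by a straightforward application of the distributive law for matrix multiplication, followed by a bijective reindexing of the resulting $4^n$ terms. The distributive law $(A+B)(C+D)=AC+AD+BC+BD$ survives the non-commutativity of matrix multiplication, so expanding an ordered product of sums of matrices yields an ordered sum of products with no difficulty.

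First I would expand the left-hand side one factor at a time, writing
\[
\prod_{k=0}^{n-1}\Bigl(\sum_{(a,b)\in\{0,1\}^2}F_k(a,b)\Bigr)=\sum_{((a_0,b_0),\ldots,(a_{n-1},b_{n-1}))\in(\{0,1\}^2)^n}\prod_{k=0}^{n-1}F_k(a_k,b_k),
\]
where the inner matrix product on the right is taken in the original left-to-right order $k=0,1,\ldots,n-1$. This identity is a routine induction on $n$: the base case $n=0$ is the empty-product convention (both sides equal the identity matrix), and the inductive step merely distributes the last factor over the sum already produced for the previous $n-1$ factors.

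Next I would exhibit the bijection between $(\{0,1\}^2)^n$ and $\{0,1,\ldots,L-1\}$ given by
\[
((a_0,b_0),\ldots,(a_{n-1},b_{n-1}))\longmapsto m=\sum_{k=0}^{n-1}\bigl(a_k\cdot 2^{2k}+b_k\cdot 2^{2k+1}\bigr).
\]
By the uniqueness of the binary expansion (\ref{equ:dec}), this map is a bijection, and under it $a_k=\delta_{2k}(m)$ and $b_k=\delta_{2k+1}(m)$ for every $k$. Reindexing the expanded sum by $m$ therefore converts the right-hand side of the previous display into exactly the right-hand side of the lemma.

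The only point that requires any care is ensuring that the order of matrix factors in the expanded product is faithfully preserved under the reindexing, and this follows immediately from the uniform left-to-right indexing by $k$ on both sides. I do not anticipate any real obstacle; the lemma is essentially the generalized distributive law, dressed up in matrix notation with a base-$4$ reading of the $2n$-bit choice string as a single index $m\in[0,L-1]$.
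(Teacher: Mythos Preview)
Your proposal is correct and follows essentially the same approach as the paper: expand the ordered product of four-term sums by the distributive law into $4^n$ ordered products indexed by $(\delta_0,\ldots,\delta_{2n-1})\in\{0,1\}^{2n}$, then reindex via the binary expansion $m=\sum_k \delta_k 2^k$. Your version is in fact a bit more careful than the paper's in explicitly noting that the factor order must be preserved and in framing the reindexing as a bijection, but the underlying argument is identical.
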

\begin{proof}
\begin{eqnarray}
&&\prod^{n-1}_{k=0}(F_k(0,0)+F_k(1,0)+F_k(0,1)+F_k(1,1))\nonumber\\
&=&(F_0(0,0)+F_0(1,0)+F_0(0,1)+F_0(1,1))\times\nonumber\\
&&(F_1(0,0)+F_1(1,0)+F_1(0,1)+F_1(1,1))\times\cdots\times\nonumber\\
&&(F_{n-2}(0,0)+F_{n-2}(1,0)+F_{n-2}(0,1)+F_{n-2}(1,1))\times \nonumber\\
&&(F_{n-1}(0,0)+F_{n-1}(1,0)+F_{n-1}(0,1)+F_{n-1}(1,1))\nonumber\\
&=&F_{n-1}(0,0)\cdots F_{1}(0,0)F_{0}(0,0)+F_{n-1}(0,0)\cdots
F_{1}(0,0)F_{0}(0,1)+\nonumber\\
&&\cdots+ F_{n-1}(1,1)\cdots F_{1}(1,1)F_{0}(1,1)\nonumber\\
&=&\sum^1_{\delta_0=0}\sum^1_{\delta_1=0}\cdots\sum^1_{\delta_{2n-1}=0}\left\{\prod^{n-1}_{k=0}F_k(\delta_{2k},\delta_{2k+1})\right\}\nonumber\\
&=&\sum^{L-1}_{m=0}\left\{\prod^{n-1}_{k=0}F_k(\delta_{2k}(m),\delta_{2k+1}(m))\right\}.
\end{eqnarray}
\end{proof}

\begin{corollary}
Let $\pi$ be a permutation
of numbers $\{0,1,2,\cdots,n-1\}$, then the equation in
Lemma~\ref{lem:gong} can be rewritten as:
\begin{equation}
\prod^{n-1}_{t=0}(F_{\pi(t)}(0,0)+F_{\pi(t)}(1,0)+F_{\pi(t)}(0,1)+F_{\pi(t)}(1,1))=\sum^{L-1}_{m=0}\left\{\prod^{n-1}_{t=0}
F_{\pi(t)}(\delta_{2\pi(t)}(m),\delta_{2\pi(t)+1}(m))\right\}.
\end{equation}
\end{corollary}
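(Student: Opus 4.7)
The plan is to derive the Corollary directly from Lemma~\ref{lem:gong} by reindexing, treating $\pi$ simply as a relabeling of the running index of the product. Because both sides of the claimed identity use the \emph{same} left-to-right ordering $t = 0, 1, \ldots, n-1$ for the matrix product, no commutativity of matrix multiplication is ever needed --- which is the key point to keep in mind, since the $F_k$ are matrix-valued.

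First I would repeat the distributive expansion used in the proof of Lemma~\ref{lem:gong}, but applied directly to the $n$-fold product $\prod_{t=0}^{n-1}\bigl(F_{\pi(t)}(0,0)+F_{\pi(t)}(1,0)+F_{\pi(t)}(0,1)+F_{\pi(t)}(1,1)\bigr)$. Since the product is a product of $n$ sums of four matrices each, multiplying out gives $4^n$ ordered products, each obtained by picking one term $F_{\pi(t)}(a_t,b_t)$ (with $a_t,b_t\in\{0,1\}$) from the $t$-th factor while preserving the order in $t$.

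Next I would label the chosen pair at position $t$ by $(\delta_{2\pi(t)},\delta_{2\pi(t)+1}) := (a_t,b_t)$. Because $\pi$ is a bijection of $\{0,1,\ldots,n-1\}$, as $t$ runs through $\{0,\ldots,n-1\}$ the index $\pi(t)$ also runs through $\{0,\ldots,n-1\}$, so the 2-tuples of labels $(2\pi(t),2\pi(t)+1)$ cover the pairs $(0,1),(2,3),\ldots,(2n-2,2n-1)$ exactly once. Hence summing over all $4^n$ choices is the same as summing over all assignments $(\delta_0,\delta_1,\ldots,\delta_{2n-1})\in\{0,1\}^{2n}$.

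Finally I would repackage that sum using the binary expansion $m=\sum_{k=0}^{n-1}(\delta_{2k}(m)\,2^{2k}+\delta_{2k+1}(m)\,2^{2k+1})$ given in (\ref{equ:dec}), which turns the sum over $(\delta_0,\ldots,\delta_{2n-1})$ into a sum over $m\in\{0,1,\ldots,L-1\}$, yielding the right-hand side of the Corollary. There is essentially no obstacle: the only point worth highlighting is that the product on the left uses the \emph{same} $t$-ordering as the product inside the sum on the right, so the argument is just the Lemma~\ref{lem:gong} expansion with the factors re-listed in the $\pi$-order; no reshuffling of non-commuting matrix factors occurs.
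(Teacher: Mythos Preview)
Your proposal is correct and matches the paper's approach: the paper states the Corollary without proof, treating it as an immediate reindexing of Lemma~\ref{lem:gong}, and your argument is precisely that reindexing spelled out in detail. Your emphasis that the $t$-ordering of the product is preserved on both sides (so no commutativity of the matrix-valued $F_k$ is required) is exactly the one point worth noting, and the repackaging via the binary expansion~(\ref{equ:dec}) is the same step used in the Lemma.
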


Let $$M^{\{n\}}(z)=\sum^{L-1}_{m=0}M^{\{n\}}(m)\cdot z^m,$$
where $M^{\{n\}}(m)$ is the coefficient matrix of the polynomial matrix $M^{\{n\}}(Z)$ of item $z^m$. We show that
every entry of $M^{\{n\}}(m)$ is $q$th root of unity in the following theorem, which complete the proof of Theorem 2.

\begin{theorem}
\label{thm:M}
Let $N=4$ in Construction \ref{cons:main}. The coefficient matrix of the polynomial matrix $M^{\{n\}}(z)$ of item $z^m$ can be presented by

\begin{eqnarray}
\label{equ:M}
&&M^{\{n\}}(m)=H^{\{0\}}\cdot\prod^{n}_{t=1}\left(A_t\cdot
H^{\{t\}}\right),
\end{eqnarray}
where
\begin{eqnarray}
A_{t+1}=\begin{bmatrix}\overline{\delta_{2\pi(t)}(m)}\cdot
\overline{\delta_{2\pi(t)+1}(m)}&0&0&0\\
0&\delta_{2\pi(t)}(m)\cdot\overline{\delta_{2\pi(t)+1}(m)}&0&0\\
0&0&\overline{\delta_{2\pi(t)}(m)}\cdot\delta_{2\pi(t)+1}(m)&0\\
0&0&0&\delta_{2\pi(t)}(m)\cdot\delta_{2\pi(t)+1}(m)\end{bmatrix}\nonumber
\end{eqnarray}

\begin{remark}
Only one entry of matrix $A_t$ is 1, and the others are all 0.
\end{remark}

\end{theorem}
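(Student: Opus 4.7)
The plan is to expand each diagonal factor $D(z)^{N^{\pi(t-1)}}$ as a sum of four monomials in $z$ multiplied by rank-one diagonal projectors, and then read off the coefficient of $z^m$ in the resulting product. Writing a diagonal index $j\in\{0,1,2,3\}$ in two bits as $j=b_0+2b_1$, we have
$$D(z)^{N^{\pi(t-1)}}=\sum_{b_0,b_1\in\{0,1\}} z^{(b_0+2b_1)N^{\pi(t-1)}}\,E_{b_0,b_1},$$
where $E_{b_0,b_1}$ is the $4\times 4$ diagonal one-hot matrix with a $1$ in position $b_0+2b_1$ and zeros elsewhere. The four projectors $E_{0,0},E_{1,0},E_{0,1},E_{1,1}$ correspond exactly to the four diagonal entries that appear inside $A_{t+1}$ in the statement of the theorem.

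Next, I would substitute this decomposition into the product defining $M^{\{n\}}(z)$ and apply the corollary of Lemma~\ref{lem:gong} to the matrix Boolean functions
$$F_{\pi(t-1)}(b_0,b_1)=z^{(b_0+2b_1)N^{\pi(t-1)}}\,E_{b_0,b_1}\,H^{\{t\}}.$$
This collapses the product of four-term sums into a single sum of products indexed by $m\in[0,L-1]$. The exponent of $z$ in the $m$-th term is $\sum_{t=1}^{n}(b_0^{(t)}+2b_1^{(t)})\cdot 2^{2\pi(t-1)}$, since $N^{\pi(t-1)}=2^{2\pi(t-1)}$. Because $\{2\pi(t-1),2\pi(t-1)+1\colon t=1,\ldots,n\}$ is a permutation of $\{0,1,\ldots,2n-1\}$, uniqueness of the binary expansion forces $b_0^{(t)}=\delta_{2\pi(t-1)}(m)$ and $b_1^{(t)}=\delta_{2\pi(t-1)+1}(m)$, so that the exponent reassembles to exactly $m$. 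Plugging these selected bits into $E_{b_0^{(t)},b_1^{(t)}}$ reproduces the matrix $A_t$ defined in the theorem, and premultiplying by $H^{\{0\}}$ delivers the claimed formula $M^{\{n\}}(m)=H^{\{0\}}\prod_{t=1}^{n}(A_t H^{\{t\}})$.

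To close the argument, I would observe that each $A_t$ has a single non-zero diagonal entry equal to $1$, so the right-hand side collapses, entry by entry, into a chain $M^{\{n\}}(m)_{i,j}=H^{\{0\}}_{i,j_1}H^{\{1\}}_{j_1,j_2}\cdots H^{\{n\}}_{j_n,j}$ of Hadamard matrix entries, where $j_t$ is the diagonal position selected by $A_t$. This product of $q$th roots of unity is again a $q$th root of unity, yielding properties (2) and (3) required after Theorem~2; property (1) is immediate from $H^{\{t\}}H^{\{t\}\dagger}=N\cdot I_N$ and the paraunitarity of $D(z)$. The principal difficulty I anticipate is purely one of bookkeeping, namely keeping the permutation $\pi$ synchronised with the base-$2$ bit positions $2\pi(t-1)$ and $2\pi(t-1)+1$ so that the summed exponents recover $m$ without overlap or repetition; the key identity $N^{\pi(t-1)}=2^{2\pi(t-1)}$ for $N=4$ is what makes the assignment of bit-blocks to factors unambiguous.
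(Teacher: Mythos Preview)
Your proposal is correct and follows essentially the same route as the paper: decompose each $D(z)^{N^{\pi(t-1)}}$ as a sum of four diagonal rank-one projectors $E_{b_0,b_1}$ weighted by monomials $z^{(b_0+2b_1)4^{\pi(t-1)}}$, invoke Corollary~4 (the permuted version of Lemma~\ref{lem:gong}) to pass between the product-of-sums and sum-of-products, and identify bits via the binary expansion of $m$. The only cosmetic difference is direction: you expand $M^{\{n\}}(z)$ and read off the coefficient of $z^m$, while the paper starts from the claimed coefficient formula and sums over $m$ to reconstitute $M^{\{n\}}(z)$; the underlying computation and the use of Corollary~4 are identical.
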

\begin{proof}
It is sufficient to prove  that the univariate polynomial with coefficients in
(\ref{equ:M}) equals to the expression in Construction \ref{cons:main} with $N=4$, i.e.,
$$\sum^{L-1}_{m=0} H^{\{0\}}\cdot\prod^{n}_{t=1}\left(A_t\cdot
H^{\{t\}}\right)z^m=M^{\{n\}}(z).$$

Here we substitute $m$ with its binary expansion:
\begin{eqnarray*}
m&=&\sum^{n-1}_{k=0}(\delta_{2k}2^{2k}+\delta_{2k+1}2^{2k+1})\\
&=&\sum^{n-1}_{t=0}(\delta_{2\pi(t)}2^{2\pi(t)}+\delta_{2\pi(t)+1}2^{2\pi(t)+1}).
\end{eqnarray*}
Then we have
\begin{footnotesize}
\begin{eqnarray}
\label{equ:main}
& &\sum^{L-1}_{m=0} H^{\{0\}}\cdot\prod^{n}_{t=1}\left(A_t\cdot
H^{\{t\}}\right)z^m \nonumber\\
&=&H^{\{0\}}\cdot\sum^{L-1}_{m=0}\left\{\left\{\prod_{t=1}^{n}\left(A_t
\cdot H^{\{t\}}\right)\right\}\cdot
z^{\sum^{n-1}_{t=0}(\delta_{2\pi(t)}2^{2\pi(t)}+\delta_{2\pi(t)+1}2^{2\pi(t)+1})}\right\}\nonumber\\
&=&H^{\{0\}}\cdot\sum^{L-1}_{m=0}\left\{\left\{\prod_{t=1}^{n}\left(A_t
\cdot H^{\{t\}}\right)\right\}\cdot
\prod^{n-1}_{t=0}z^{\delta_{2\pi(t)}2^{2\pi(t)}+\delta_{2\pi(t)+1}2^{2\pi(t)+1}}\right\}\nonumber\\
&=&H^{\{0\}}\cdot\sum^{L-1}_{m=0}\left\{\prod_{t=0}^{n-1}\left(A_{t+1}
\cdot
z^{\delta_{2\pi(t)}2^{2\pi(t)}+\delta_{2\pi(t)+1}2^{2\pi(t)+1}}H^{\{t+1\}}\right)\right\}
\end{eqnarray}
\end{footnotesize}
%
%
%
Note that
$$\overline{\delta_{2\pi(t)}(m)}\cdot
\overline{\delta_{2\pi(t)+1}(m)}\cdot
z^{\delta_{2\pi(t)}2^{2\pi(t)}+\delta_{2\pi(t)+1}2^{2\pi(t)+1}}\cdot
H^{\{t+1\}}=\overline{\delta_{2\pi(t)}(m)}\cdot
\overline{\delta_{2\pi(t)+1}(m)}\cdot H^{\{t+1\}},$$
$$\delta_{2\pi(t)}(m)\cdot\overline{\delta_{2\pi(t)+1}(m)}\cdot
z^{\delta_{2\pi(t)}2^{2\pi(t)}+\delta_{2\pi(t)+1}2^{2\pi(t)+1}}\cdot
H^{\{t+1\}}=\delta_{2\pi(t)}(m)\cdot\overline{\delta_{2\pi(t)+1}(m)}\cdot
z^{2^{2\pi(t)}}\cdot H^{\{t+1\}},$$

$$\overline{\delta_{2\pi(t)}(m)}\cdot\delta_{2\pi(t)+1}(m)\cdot
z^{\delta_{2\pi(t)}2^{2\pi(t)}+\delta_{2\pi(t)+1}2^{2\pi(t)+1}}\cdot
H^{\{t+1\}}=\overline{\delta_{2\pi(t)}(m)}\cdot\delta_{2\pi(t)+1}(m)\cdot
z^{2^{2\pi(t)+1}}\cdot H^{\{t+1\}},$$
and
$$\delta_{2\pi(t)}(m)\cdot\delta_{2\pi(t)+1}(m)\cdot
z^{\delta_{2\pi(t)}2^{2\pi(t)}+\delta_{2\pi(t)+1}2^{2\pi(t)+1}}\cdot
H^{\{t+1\}}=\delta_{2\pi(t)}(m)\cdot\delta_{2\pi(t)+1}(m)\cdot
z^{2^{2\pi(t)}+2^{2\pi(t)+1}}\cdot H^{\{t+1\}},$$
we have
\begin{footnotesize}
\begin{eqnarray}
&&\prod_{t=0}^{n-1}\left(A_t \cdot
z^{\delta_{2\pi(t)}2^{2\pi(t)}+\delta_{2\pi(t)+1}2^{2\pi(t)+1}}\right)
H^{\{t+1\}}\nonumber\\
&=&\prod_{t=0}^{n-1}\left(\begin{bmatrix}1&0&0&0\\0&0&0&0\\0&0&0&0\\0&0&0&0\end{bmatrix}\overline{\delta_{2\pi(t)}(m)}\cdot
\overline{\delta_{2\pi(t)+1}(m)}\right.
+\begin{bmatrix}0&0&0&0\\0&1&0&0\\0&0&0&0\\0&0&0&0\end{bmatrix}\delta_{2\pi(t)}(m)\overline{\delta_{2\pi(t)+1}(m)}
z^{2^{2\pi(t)}}+ \nonumber\\
&&\begin{bmatrix}0&0&0&0\\0&0&0&0\\0&0&1&0\\0&0&0&0\end{bmatrix}\overline{\delta_{2\pi(t)}(m)}\delta_{2\pi(t)+1}(m)
z^{2\cdot 2^{2\pi(t)}}
+\left.\begin{bmatrix}0&0&0&0\\0&0&0&0\\0&0&0&0\\0&0&0&1\end{bmatrix}\delta_{2\pi(t)}(m)\delta_{2\pi(t)+1}(m)
z^{3\cdot2^{2\pi(t)}} \right)
H^{\{t+1\}}\nonumber.
\end{eqnarray}
\end{footnotesize}

Define
\begin{footnotesize}
\begin{eqnarray}
&&F_{\pi(t)}(\delta_{2\pi(t)}(m),\delta_{2\pi(t)+1}(m))
\nonumber\\
&=&\Big(\begin{bmatrix}1&0&0&0\\0&0&0&0\\0&0&0&0\\0&0&0&0\end{bmatrix}\cdot\overline{\delta_{2\pi(t)}(m)}\cdot
\overline{\delta_{2\pi(t)+1}(m)}+\begin{bmatrix}0&0&0&0\\0&1&0&0\\0&0&0&0\\0&0&0&0\end{bmatrix}\cdot\delta_{2\pi(t)}(m)\cdot\overline{\delta_{2\pi(t)+1}(m)}\cdot
z^{2^{2\pi(t)}}+\nonumber\\
&&\begin{bmatrix}0&0&0&0\\0&0&0&0\\0&0&1&0\\0&0&0&0\end{bmatrix}\overline{\delta_{2\pi(t)}(m)}\delta_{2\pi(t)+1}(m)\cdot
z^{2^{2\cdot 2\pi(t)}}+\begin{bmatrix}0&0&0&0\\0&0&0&0\\0&0&0&0\\0&0&0&1\end{bmatrix}\delta_{2\pi(t)}(m)\delta_{2\pi(t)+1}(m)
z^{3\cdot 2^{2\pi(t)}}\Big)\cdot H^{\{t+1\}}.\nonumber
\end{eqnarray}
\end{footnotesize}
Then it is easy to see that \begin{footnotesize}
$$F_{\pi(t)}(0,0)=\begin{bmatrix}1&0&0&0\\0&0&0&0\\0&0&0&0\\0&0&0&0\end{bmatrix}\cdot
H^{\{t+1\}},$$
$$F_{\pi(t)}(1,0)=\begin{bmatrix}0&0&0&0\\0&1&0&0\\0&0&0&0\\0&0&0&0\end{bmatrix}\cdot z^{2^{2\pi(t)}}\cdot
H^{\{t+1\}},$$
$$F_{\pi(t)}(0,1)=\begin{bmatrix}0&0&0&0\\0&0&0&0\\0&0&1&0\\0&0&0&0\end{bmatrix}\cdot
z^{2\cdot 2^{2\pi(t)}}\cdot H^{\{t+1\}},$$
and
$$F_{\pi(t)}(1,1)=\begin{bmatrix}0&0&0&0\\0&0&0&0\\0&0&0&0\\0&0&0&1\end{bmatrix}\cdot
z^{3\cdot 2^{2\pi(t)}}\cdot H^{\{t+1\}}.$$
\end{footnotesize}

By ysing Corollary 4,  we have
\begin{footnotesize}
\begin{eqnarray}
& &\sum^{L-1}_{m=0} H^{\{0\}}\cdot\prod^{n}_{t=1}\left(A_t\cdot
H^{\{t\}}\right)z^m \nonumber\\
&=& H^{\{0\}}\cdot\left\{\prod_{t=0}^{n-1}\left(\begin{bmatrix}1&0&0&0\\0&0&0&0\\0&0&0&0\\0&0&0&0\end{bmatrix}\cdot
H^{\{t+1\}}\right.\right.+\begin{bmatrix}0&0&0&0\\0&1&0&0\\0&0&0&0\\0&0&0&0\end{bmatrix}\cdot
z^{2^{2\pi(t)}}\cdot H^{\{t+1\}}+\nonumber\\
&&\begin{bmatrix}0&0&0&0\\0&0&0&0\\0&0&1&0\\0&0&0&0\end{bmatrix}\cdot
z^{2\cdot 2^{2\pi(t)}}\cdot H^{\{t+1\}}+\left.\left.\begin{bmatrix}0&0&0&0\\0&0&0&0\\0&0&0&0\\0&0&0&1\end{bmatrix}\cdot
z^{3\cdot 2^{2\pi(t)})}\cdot H^{\{t+1\}}\right)\right\}\nonumber\\
&=&H^{\{0\}}\left\{\prod_{t=1}^{n}\left(\begin{bmatrix}1&0&0&0\\0&z^{4^{\pi(t)}}&0&0
\\0&0&z^{2\cdot4^{\pi(t)}}&0\\0&0&0&z^{3\cdot4^{\pi(t)}}\end{bmatrix}
\cdot H^{\{t\}}\right)\right\}\nonumber\\
&=&M^{\{n\}}(z).
\end{eqnarray}
\end{footnotesize}
This completes the proof.
\end{proof}

Since each $A_t$ has only one nonzero element, only one row of product $A_t\cdot
H^{\{t\}}$ is nonzero. Therefore, every entry of  $M^{\{n\}}(m)$ is
a product of the specific entries of $H^{\{t\}}$ for $t\in{0,1,2,\cdots,n},$ which must be a $q$th root of unity.
\begin{example}
In Construction \ref{cons:main}, let
$H^{\{0\}}=\begin{bmatrix}1&1&1&1\\1&i&-1&-i\\1&-i&-1&i\\1&-1&1&-1\end{bmatrix}$,
$H^{\{1\}}=\begin{bmatrix}1&i&-1&-i\\1&1&1&1\\1&-1&1&-1\\1&-i&-1&i\end{bmatrix}$ and
$H^{\{2\}}=\begin{bmatrix}1&1&1&1\\1&i&-1&-i\\1&-1&1&-1\\1&-i&-1&i\end{bmatrix}$
be Butson type $(4,4)$ Hadamard matrices. The Boolean functions
corresponding to the elements of $M^{\{3\}}(z)=H^{\{0\}}\cdot
D(z)^{4^1}\cdot H^{\{1\}}\cdot D(z)^{4^0}\cdot H^{{\{2\}}}$ were calculated as
follows (Suppose $Boolfunc_{r,s}=f_{r,s}(x_0, x_1, x_2, x_3)$ be the
Boolean function corresponding to $M_{r,s}^{\{3\}}(z)$, where $x_0$
is the most significant bit).
\begin{flushleft}
$Boolfunc_{0,0}=3 x_0 + x_1 + 2x_2 + x_3 + 2x_0x_1 + 2x_0x_2 +
3x_0x_3 + 2x_1x_2 + x_1x_3 + 2x_0x_1x_3 + 1$

$ Boolfunc_{0,1}=3x_0 + x_1 + 2x_0x_1 + 2x_0x_2 + 3x_0x_3 + 2x_1x_2
+ x_1x_3 + 2x_0x_1x_3$

$Boolfunc_{0,2}=3x_0 + x_1 + 2x_2 + 3x_3 + 2x_0x_1 + 2x_0x_2 +
3x_0x_3 + 2x_1x_2 + x_1x_3 + 2x_0x_1x_3 + 3$

$Boolfunc_{0,3}=3x_0 + x_1 + 2x_3 + 2x_0x_1 + 2x_0x_2 + 3x_0x_3 +
2x_1x_2 + x_1x_3 + 2x_0x_1x_3 + 2$

$Boolfunc_{1,0}=x_0 + 2x_2 + x_3 + 2x_0x_1 + 2x_0x_2 + 3x_0x_3 +
2x_1x_2 + x_1x_3 + 2x_0x_1x_3$

$Boolfunc_{1,1}=x_0 + 2x_0x_1 + 2x_0x_2 + 3x_0x_3 + 2x_1x_2 + x_1x_3
+ 2x_0x_1x_3 + 3$

$Boolfunc_{1,2}=x_0 + 2x_2 + 3x_3 + 2x_0x_1 + 2x_0x_2 + 3x_0x_3 +
2x_1x_2 + x_1x_3 + 2x_0x_1x_3 + 2$

$Boolfunc_{1,3}=x_0 + 2x_3 + 2x_0x_1 + 2x_0x_2 + 3x_0x_3 + 2x_1x_2 +
x_1x_3 + 2x_0x_1x_3 + 1$

$Boolfunc_{2,0}=x_0 + 2x_3 + 2x_0x_1 + 2x_0x_2 + 3x_0x_3 + 2x_1x_2 +
x_1x_3 + 2x_0x_1x_3 + 1$

$Boolfunc_{2,1}=x_0 + 2x_1 + 2x_0x_1 + 2x_0x_2 + 3x_0x_3 + 2x_1x_2 +
x_1x_3 + 2x_0x_1x_3 + 1$

$Boolfunc_{2,2}=x_0 + 2x_1 + 2x_2 + 3x_3 + 2x_0x_1 + 2x_0x_2 +
3x_0x_3 + 2x_1x_2 + x_1x_3 + 2x_0x_1x_3$

$Boolfunc_{2,3}=x_0 + 2x_1 + 2x_3 + 2x_0x_1 + 2x_0x_2 + 3x_0x_3 +
2x_1x_2 + x_1x_3 + 2x_0x_1x_3 + 3$

$Boolfunc_{3,0}=3x_0 + 3x_1 + 2x_2 + x_3 + 2x_0x_1 + 2x_0x_2 +
3x_0x_3 + 2x_1x_2 + x_1x_3 + 2x_0x_1x_3 + 3$

$Boolfunc_{3,1}=3x_0 + 3x_1 + 2x_0x_1 + 2x_0x_2 + 3x_0x_3 + 2x_1x_2
+ x_1x_3 + 2x_0x_1x_3 + 2$

$Boolfunc_{3,2}=3x_0 + 3x_1 + 2x_2 + 3x_3 + 2x_0x_1 + 2x_0x_2 +
3x_0x_3 + 2x_1x_2 + x_1x_3 + 2x_0x_1x_3 + 1$

$Boolfunc_{3,3}=3x_0 + 3x_1 + 2x_3 + 2x_0x_1 + 2x_0x_2 + 3x_0x_3 +
2x_1x_2 + x_1x_3 + 2x_0x_1x_3$

\end{flushleft}
\end{example}
In the above example,  any  set $  \{Boolfunc_{i,j}|j=0,1,2,3\}$ or $  \{Boolfunc_{i,j}|i=0,1,2,3\}$ is a complementary set of size 4.
It can be seen that any  sequences in the complementary set   have  algebraic degree 3.
These complementary set cannot be obtained by the known methods given in \cite{paterson} and \cite{schmidt07}.

\begin{example}
In Construction \ref{cons:main}, let $N=2$ and choose
$H^{\{t\}}=\begin{bmatrix}1&\theta^{c_t}\\ -\theta^{c_t}&1\end{bmatrix}$, where $\theta$ is a primitive $q$th root and $c_t\in\{0,1,\cdots, q-1\}$. Then all the $q$-ary  standard Golay pairs can be presented as the first row of $M^{\{n\}}(z)$. This results have been proved by Bud\v{i}sin and  Spasojevi$\acute{c}$'s  in \cite{BudIT}.
\end{example}

\section{Conclusion}

Inspired by the paraunitary-matrix-based method in \cite{BudIT}, we propose a new method to construct $q$-ary  Golay complementary set of size $N$ and length $N^n$ by $N\times N$ Hadamard matrices where $n$ is arbitrary and $N$ is a power of 2. Each item of the constructed sequences can be presented as the product of the specific entries of the Hadamard matrices. To make the results easier to understand for readers, we only show the proof for $N=4$. Bud\v{i}sin and  Spasojevi$\acute{c}$'s algorithm for $q$-ary Golay pair in \cite{BudIT} can be regarded as a special case of the construction in this paper by setting $N=2$ and choosing specific Hadamard matrices. We also show new quaternary Golay pair of size 4 and degree 3 never reported in the literature can be obtained by the method presented in this paper.

Since any Butson type Hadamard  matrix can be used in our construction, and in general, matrix multiplication is not commutative, a large families of
sequences with PMEPR upper bounded by constant can be obtained by choosing different Hadamard matrix in Construction \ref{cons:main}.
The details will be considered in the extended  paper.


%



\begin{thebibliography}{99}



\bibitem{BudSpas}
S.~Z.~Bud\v{i}sin, P.~Spasojevi$\acute{c}$.
\newblock Filter Bank Representation of Complementary Sequence Pairs.
\newblock {\em C. Fiftieth Annual Allerton Conference, Allerton House, UIUC, Illinois, USA}, 2012.

\bibitem{BudQAM}
S.~Z.~Bud\v{i}sin, P.~Spasojevi$\acute{c}$.
\newblock Paraunitary generation/correlation of QAM complementary sequence pairs.
\newblock {\em J. Cryptography and Communications}, 6(1): 59--102£¬2013.

\bibitem{BudIT}
S.~Z.~Bud\v{i}sin, P.~Spasojevi$\acute{c}$.
\newblock A Boolean generator for QAM and other complex
complementary sequence of length $2^k$.
\newblock {\em J.
\url{https://www.researchgate.net/publication/280883072_A_Boolean_Generator_for_QAM
and_Other_Complex_Co mplementary_Sequences_of_Length_2K}}.


\bibitem{Butson1962}
A.~T.~Butson.
\newblock Generalized Hadamard Matrices.
\newblock {\em J. Proceedings of the American Mathematical
Society}, 13(6):894--898, 1962.


\bibitem{Chong2003}
C.~V.~Chong, R.~Venkataramani and V.~Tarokh.
\newblock A new construction of 16-QAM Golay complementary sequences.
\newblock {\em J.  IEEE Trans. Inform. Theory}, 49(11)£º 2953--2959, 2003.


\bibitem{DavisJedwab99}
J.~A.~Davis and J.~Jedwab.
\newblock Peak-to-mean power control in OFDM, Golay complementary
sequences, and Reed-Muller code.
\newblock {\em J. IEEE Trans. Inform. Theory}, 45(7): 2397--2417, 1999.

\bibitem{fiedler2}
F.~Fiedler, J.~Jedwab, and  M.~G.~Parker.
\newblock A framework for the construction of Golay sequences.
\newblock {\em J. IEEE Trans. Inform. Theory}, 54(7): 3113--3129, 2008.


\bibitem{fiedler2006}
F.~Fiedler and J.~Jedwab.
\newblock How do more Golay sequences arise?.
\newblock {\em J. IEEE Trans. Inform. Theory}, 52(9): 4261--4266, 2006.

\bibitem{Frank1980}
R.~Frank,.
\newblock Polyphase complementary codes.
\newblock {\em J. IEEE Trans. Inform. Theory}, 26(6): 641--647, 1980.






\bibitem{Golay1961}
M.~J.~E.~Golay.
\newblock  Complementary series.
\newblock {\em J. IRE Transactions on Information Theory}, 7(2)£º82--87, 1961.

\bibitem{Li2005}
Y.~Li and W.~B.~Chu.
\newblock More Golay sequences.
\newblock {\em J. IEEE Trans. Inform. Theory}, 51(3): 1141--1145, 2005.

\bibitem{Li2010}
Y.~Li.
\newblock A Construction of general QAM Golay complementary sequences.
\newblock {\em J. IEEE Trans. Inform. Theory}, 55(11)£º 5765--5771, 2010.


\bibitem{Lim2009}
D.~W.~Lim, S.~J.~Heo and J.~S.~No.
\newblock  An Overview of Peak-to-Average Power Ratio Reduction
Schemes for OFDM Signals.
\newblock {\em J. Journal of Communications and Networks.}, 11(3)£º 229--239, 2009.

\bibitem{Liu2013}
Z.~Liu, Y.~Li, and Y.~L.~Guan.
\newblock A generalized construction of OFDM M-QAM sequences with low peak-to-average power ratio.
\newblock {\em J. IEEE Trans. Inform. Theory}, 59(11)£º 7684--7692, 2013.





\bibitem{paterson}
K.~G.~Paterson.
\newblock Generalized Reed-Muller codes and power control in
OFDM modulation.
\newblock {\em J. IEEE Trans. Inform. Theory}, 46(1): 104--120, 2000.




\bibitem{Popovic91}
B.~M.~Popovi$\acute{c}$.
\newblock Synthesis of power efficient multitone signals with flat
amplitude spectrum.
\newblock {\em IEEE Trans. Commun.}, 39(7)£º 1031--1033, 1991.


\bibitem{Tarokh2001}
C.~R\"{o}{\ss}ing and V.~Tarokh.
\newblock  A construction of OFDM 16-QAM sequences
having low peak powers.
\newblock {\em J. IEEE Trans. Inform. Theory}, 47(7)£º 2091--2094, 2001.


\bibitem{schmidt06}
K.-U.~Schmidt.
\newblock On cosets of the generalized first-order Reed-Muller
code with low PMEPR.
\newblock {\em J. IEEE Trans. Inform. Theory}, 52(7): 3220--3232, 2006.

\bibitem{schmidt07}
K.-U.~Schmidt.
\newblock Complementary sets, generalized Reed-Muller codes,
and power control for OFDM.
\newblock {\em J. IEEE Trans. Inform. Theory}, 53(2): 808--814, 2007.



\bibitem{Sivaswamy1978}
R.~Sivaswamy.
\newblock Multiphase complementary codes.
\newblock {\em J. IEEE Trans. Inform. Theory}, 24(5): 546--552, 1978.

\bibitem{Suehiro}
N.~Suehiro, M.~Hatori.
\newblock N-shift cross-orthogonal sequences,
and power control for OFDM.
\newblock {\em J. IEEE Trans. Inform. Theory}, 34(1): 143--146, 1988.

\bibitem{Tarokh2003}
B.~Tarokh and H.~R.~Sadjadpour.
\newblock Construction of OFDM M-QAM sequences with low peak-to-average power ratio.
\newblock {\em J. IEEE Trans. on Commun.}, 51(1)£º 25--28, 2003.






\bibitem{Tseng1972}
C.~C.~Tseng and C.~L.~Liu.
\newblock Complementary sets of sequences.
\newblock {\em J. IEEE Trans. Inform. Theory}, 18(5): 644--652, 1972.

\bibitem{Wang2009}
Z.~Wang, G.~Gong, and R.~Feng.
\newblock A generalized construction of OFDM M-QAM sequences with low peak-to-average power ratio.
\newblock {\em J. Advances in Mathematics of Communications.}, 3(4)£º 421--428, 2009.




\end{thebibliography}

\end{document}